\newcommand{\lyxaddress}[1]{
\par {\raggedright #1
\vspace{1.4em}
\noindent\par}
}
\theoremstyle{plain}
\newtheorem{thm}{\protect\theoremname}
\providecommand{\theoremname}{Theorem}
\begin{document}

\title{Quantum Financial Economics of Games of Strategy and Financial Decisions}

\author{Carlos Pedro Gonçalves}

\institution{{\small Instituto Superior de Ciências Sociais e Políticas (ISCSP)
- Technical University of Lisbon}}

\maketitle

\lyxaddress{cgoncalves@iscsp.utl.pt}
\begin{abstract}
A quantum financial approach to finite games of strategy is addressed,
with an extension of Nash's theorem to the quantum financial setting,
allowing for an entanglement of games of strategy with two-period
financial allocation problems that are expressed in terms of: the
consumption plans' optimization problem in pure exchange economies
and the finite-state securities market optimization problem, thus
addressing, within the financial setting, the interplay between companies'
business games and financial agents' behavior.

A complete set of quantum Arrow-Debreu prices, resulting from the
game of strategy's quantum Nash equilibrium, is shown to hold, even
in the absence of securities' market completeness, such that Pareto
optimal results are obtained without having to assume the completeness
condition that the rank of the securities' payoff matrix is equal
to the number of alternative lottery states.\end{abstract}
\begin{keywords}
Quantum Financial Economics, Finite Games, Quantum Nash Equilibrium,
Quantum Arrow-Debreu Prices, Securities Markets
\end{keywords}

\section{Introduction}

Underlying economic theory, the notion of game constitutes a fundamental
conceptual unit with operativity in approaching the systemic behavior
and dynamics of economic systems, an argument that was established
by von Neumann and Morgenstern in \cite{key-12}. Each (formal) game
structure attempts to formalize complex decisional problems that demand
a calculatory adaptiveness on the part of the players, a main point
which is present both in standard game theory \cite{key-10,key-12}
as well as in the quantum expansions \cite{key-9,key-11}.

Financial economics has approached, foundationally, the decisional
contexts for the problem of allocation of resources, from an exposure
to games comprised of pure lotteries, that is, a game against nature
where nature {}``chooses'' each state with a certain probability
\cite{key-1,key-5,key-8}.

However, this constitutes an oversimplification of financial systems
and their interplay with economic systems. Since companies play games
of strategy and, ultimately, a securities market for those companies'
shares is about an exposure to such strategic contexts.

Thus, to better understand the consequences of such an interplay between
business games and financial systems' resources and wealth allocation
problems \cite{key-1,key-5,key-8}, we change, in the present work,
the fundamental building block of the \emph{pure lottery game}, to
make the lottery entangled with a quantum game of strategy, and address
how the financial economics of quantum games of strategy may impact
on the structure and conclusions of the two basic financial decision
problems: \textbf{the management and optimization of consumption plans
in pure exchange economies} and the \textbf{securities' market optimization
problem} \cite{key-1,key-5}, for securities that lead to exposure,
on the part of the financial agents, to each player's position in
the quantum game.

In \textbf{section 2.}, we address the quantum financial economics
of finite games of strategy and provide for an illustrative example
of a game between two companies, showing how one may interpret both
the disentangled quantum mixed strategies equilibrium as well as the
quantum entangled solution.

In \textbf{section 3.}, we draw upon the formal work of \textbf{section
2.} and introduce a quantum game of strategy-dependent lottery, leading
to a quantum entanglement between a pure exchange economy for consumption
claims on the lottery results and the underlying game of strategy.
In this way, it its shown how one can operationalize the traditional
two-period exchange economy results of standard financial theory within
a setting in which the lottery is entangled with a strategic choice
problem.

From these results, the securities market portfolio problem is addressed,
being shown that, when each security offers an exposure to a game
position, Pareto optimality for the financial agents' consumption
problem is guaranteed, even in incomplete financial markets, as long
as the financial prices reflect the quantum game present value of
that exposure.

In \textbf{section 4.}, we conclude with some final remarks regarding
quantum game theory and quantum financial economics.

\section{Quantum Financial Economics of Finite Games of Strategy}

\subsection{The Formalism and Equilibrium Problem}

Finite games of strategy, within the framework of noncooperative quantum
game theory \cite{key-10}, can be approached from finite chain categories,
where, by finite chain category, it is understood a category $\mathcal{C}(n;N)$
that is generated by $n$ objects and $N$ morphic chains, called
primitive chains, linking the objects in a specific order, such that
there is a single labelling. $\mathcal{C}(n;N)$ is, thus, generated
by $N$ primitive chains of the form:

\begin{equation}
x_{0}\overset{f_{1}}{\longrightarrow}x_{1}\overset{f_{2}}{\longrightarrow}x_{1}...x_{n-1}\overset{f_{n}}{\longrightarrow}x_{n}
\end{equation}
A finite chain category is interpreted as a finite game category as
follows: to each morphism in a chain $x_{i-1}\overset{f_{i}}{\longrightarrow}x_{i}$
there corresponds a strategy played by a player that occupies the
position $i$, in this way, a chain corresponds to a sequence of strategic
choices available to the players.

A quantum formal theory, for a finite game category $\mathcal{C}(n;N)$,
is defined as a formal structure such that each morphic fundament
$f_{i}$ of the morphic relation $x_{i-1}\overset{f_{i}}{\longrightarrow}x_{i}$
is a tuple of the form:
\begin{equation}
f_{i}:=\left(\mathcal{H}_{i},\mathcal{P}_{i},\hat{P}_{f_{i}}\right)
\end{equation}
where $\mathcal{H}_{i}$ is the $i$-th player's Hilbert space, $\mathcal{P}_{i}$
is a complete set of projectors onto a basis that spans the Hilbert
space, and $\hat{P}_{f_{i}}\in\mathcal{P}_{i}$. This structure is
interpreted as follows: from the strategic Hilbert space $\mathcal{H}_{i}$,
given the pure strategies' projectors $\mathcal{P}_{i}$, the player
chooses to play $\hat{P}_{f_{i}}$.

From the morphic fundament definition (2), an assumption has to be
made on composition in the finite category, we assume the following
tensor product composition operation:
\begin{equation}
f_{j}\circ f_{i}=f_{ji}
\end{equation}
\begin{equation}
f_{ji}=\left(\mathcal{H}_{ji}=\mathcal{H}_{j}\otimes\mathcal{H}_{i},\mathcal{P}_{ji}=\mathcal{P}_{j}\otimes\mathcal{P}_{i},\hat{P}_{f_{ji}}=\hat{P}_{f_{j}}\otimes\hat{P}_{f_{i}}\right)
\end{equation}
From this definition of composition, a morphism for a \emph{game choice
path} can be introduced as:
\begin{equation}
x_{0}\overset{f_{n...21}}{\longrightarrow}x_{n}
\end{equation}
\begin{equation}
f_{n...21}=\left(\mathcal{H}_{G}=\bigotimes_{i=n}^{1}\mathcal{H}_{i},\mathcal{P}_{G}=\bigotimes_{i=n}^{1}\mathcal{P}_{i},\hat{P}_{f_{n...21}}=\bigotimes_{i=n}^{1}\hat{P}_{f_{i}}\right)
\end{equation}
in this way, the choices along the chain of players are completely
{}``encoded'' in the tensor product projectors $\hat{P}_{f_{n...21}}$.
Given the above definitions, there are $N=\prod_{i=1}^{n}\dim\left(\mathcal{H}_{i}\right)$
such morphisms, a number that coincides with the number of primitive
chains in the category $\mathcal{C}(n;N)$.

Each projector can be addressed as a strategic marker of a game path,
and leads to the matrix form of an Arrow-Debreu security \cite{key-1,key-5},
therefore, we call it \emph{game Arrow-Debreu projector}.

While, in traditional financial economics, the Arrow-Debreu securities
pay one unit of numeraire per state of nature, in the present game
setting, they pay one unit of payoff per game path at the beginning
of the game, however this analogy may be taken it must be addressed
with some care, since these are not securities, rather, they represent,
projectively, strategic choice chains in the game, so that the price
of a projector $\hat{P}_{f_{n...21}}$ (the Arrow-Debreu price) is
the price of a strategic choice and, therefore, the result of the
strategic evaluation of the game by the different players.

Now, let $\left|\Psi\right\rangle $ be a ket vector in the game's
Hilbert space $\mathcal{H}_{G}$, such that:
\begin{equation}
\left|\Psi\right\rangle =\sum_{f_{n...21}}\psi\left(f_{n...21}\right)\left|f_{n...21}\right\rangle 
\end{equation}
where $\psi\left(f_{n...21}\right)$ is the Arrow-Debreu price amplitude,
with the condition: 
\begin{equation}
\sum_{f_{n...21}}\left|\psi\left(f_{n...21}\right)\right|^{2}=D
\end{equation}
for $D>0$, then, the $\left|\psi\left(f_{n...21}\right)\right|^{2}$
correspond to the Arrow-Debreu prices for the game path $f_{n...21}$
and $D$ is the discount factor in riskless borrowing, defining an
economic scale for temporal connections between one unit of payoff
now and one unit of payoff at the end of the game, such that one unit
of payoff now can be {}``capitalized'' to the end of the game (when
the decision takes place) through a multiplication by $\frac{1}{D}$,
while one unit of payoff at the end of the game can be discounted
to the beginning of the game through multiplication by $D$.

In this case, the unit operator $\hat{1}=\sum_{f_{n...21}}\hat{P}\left(f_{n...21}\right)$
has a similar profile as that of a bond in standard financial economics,
with $\left\langle \Psi\left|\hat{1}\right|\Psi\right\rangle =D$,
on the other hand, the general payoff system, for each player, can
be addressed from an operator expansion:
\begin{equation}
\hat{\pi}_{i}=\sum_{f_{n...21}}\pi_{i}\left(f_{n...21}\right)\hat{P}_{f_{n...21}}
\end{equation}
where each weight $\pi_{i}\left(f_{n...21}\right)$ corresponds to
quantities associated with each \emph{Arrow-Debreu projector} that
can be interpreted as similar to the quantities of each Arrow-Debreu
security for a general asset. Multiplying each weight by the corresponding
Arrow-Debreu price, one obtains the payoff value for each alternative
such that the total payoff for the player at the end of the game is
given by:
\begin{equation}
\left\langle \Psi\left|\hat{\pi}_{i}\right|\Psi\right\rangle =\sum_{f_{n...21}}\pi_{i}\left(f_{n...21}\right)\frac{\left|\psi\left(f_{n...21}\right)\right|^{2}}{D}
\end{equation}

We can discount the total payoff in (10) to the beginning of the game
using the discount factor $D$, leading to the present value payoff
for the player:
\begin{equation}
PV_{i}=D\left\langle \Psi\left|\hat{\pi}_{i}\right|\Psi\right\rangle =D\sum_{f_{n...21}}\pi_{i}\left(f_{n...21}\right)\frac{\left|\psi\left(f_{n...21}\right)\right|^{2}}{D}
\end{equation}
In the above equation, the $\pi_{i}\left(f_{n...21}\right)$ represent
quantities, while the ratio $\frac{\left|\psi\left(f_{n...21}\right)\right|^{2}}{D}$
represents the future value (at the decision moment) of the quantum
Arrow-Debreu prices (capitalized quantum Arrow-Debreu prices). Introducing
the ket $\left|Q\right\rangle \in\mathcal{H}_{G}$, such that:
\begin{equation}
\left|Q\right\rangle =\frac{1}{\sqrt{D}}\left|\Psi\right\rangle 
\end{equation}
then, $\left|Q\right\rangle $ is a normalized ket for which the price
amplitudes are expressed in terms of their future value. Replacing
in (11), we have:
\begin{equation}
PV_{i}=D\left\langle Q\left|\hat{\pi}_{i}\right|Q\right\rangle 
\end{equation}

In the quantum game setting, the capitalized Arrow-Debreu price amplitudes$\left\langle f_{n...21}|Q\right\rangle $
become quantum strategic configurations, resulting from the strategic
cognition of the players with respect to the game. Given $\left|Q\right\rangle $,
each player's strategic valuation of each pure strategy can be obtained
by introducing the projector chains:
\begin{equation}
\hat{C}_{f_{i}}=\sum_{f_{n...i+1},f_{i-1...1}}\hat{P}_{f_{n...i+1}}\otimes\hat{P}_{f_{i}}\otimes\hat{P}_{f_{i-1...1}}
\end{equation}
with $\sum_{f_{i}}\hat{C}_{f_{i}}=\hat{1}$. For each alternative
choice of the player $i$, the chain sums over all of the other choice
paths for the rest of the players, such chains are called coarse-grained
chains in the decoherent histories approach to quantum mechanics \cite{key-2,key-4,key-6},
following this approach, one may introduce the pricing functional
from the expression for the decoherence functional \cite{key-2,key-4,key-6}:
\begin{equation}
\mathcal{D}\left(f_{i},g_{i}:\left|Q\right\rangle \right)=\left\langle Q\left|\hat{C}_{f_{i}}^{\dagger}C_{g_{i}}\right|Q\right\rangle 
\end{equation}
we, then, have, for each player:
\begin{equation}
\mathcal{D}\left(f_{i},g_{i}:\left|Q\right\rangle \right)=0,\:\forall f_{i}\neq g_{i}
\end{equation}
this is the usual quantum mechanics' condition for an aditivity of
measure (also known as decoherence condition), which, in the present
case, means that the capitalized prices for two different alternative
choices of player $i$ are additive. Then, we can work with the pricing
functional $\mathcal{D}\left(f_{i},f_{i}:\left|Q\right\rangle \right)$
as giving, for each player an Arrow-Debreu capitalized price associated
with the pure strategy, expressed by $f_{i}$. Given that the condition
(16) is satisfied, each player's quantum Arrow-Debreu pricing matrix,
defined analogously to the decoherence matrix from the decoherent
histories approach, is a diagonal matrix and can be expanded as a
linear combination of the projectors for each player's pure strategies
as follows:
\begin{equation}
\mathbf{D}_{i}\left(\left|Q\right\rangle \right)=\sum_{f_{i}}\mathcal{D}\left(f_{i},f_{i}:\left|Q\right\rangle \right)\hat{P}_{f_{i}}
\end{equation}
which has the mathematical expression of a mixed strategy. In particular,
$\mathbf{D}_{i}\left(\left|Q\right\rangle \right)$ can be regarded
as points in a simplex whose vertices are the pure strategies' projectors.
Gathering all of the pricing matrices $\mathbf{D}_{i}\left(\left|Q\right\rangle \right)$
we obtain the \emph{n}-tuple $\mathfrak{D}\left(\left|Q\right\rangle \right)=\left(\mathbf{D}_{1}\left(\left|Q\right\rangle \right),\mathbf{D}_{2}\left(\left|Q\right\rangle \right),...,\mathbf{D}_{n}\left(\left|Q\right\rangle \right)\right)$.
Introducing $pv_{i}\left[\mathfrak{D}\left(\left|Q\right\rangle \right)\right]:=PV_{i}=\left\langle Q\left|\hat{\pi}_{i}\right|Q\right\rangle $,
Nash's equilibrium definition \cite{key-10}, follows, within the
quantum financial setting, as:
\begin{equation}
pv_{i}\left[\mathfrak{D}\left(\left|Q\right\rangle \right)\right]=\max_{\textrm{all }\left|Q'\right\rangle }\left\{ pv_{i}\left[\mathfrak{D}\left(\left|Q'\right\rangle ;\mathbf{D}_{i}\left(\hat{U}\left|Q'\right\rangle \right)\right)\right]\right\} 
\end{equation}
where by $\mathfrak{D}\left(\left|Q'\right\rangle ;\mathbf{D}_{i}\left(\hat{U}\left|Q'\right\rangle \right)\right)$
it is understood that, given the feasible quantum price strategy $\left|Q'\right\rangle $,
the \emph{i}-th player performs a unitary rotation $\hat{U}$ leading
to a substitution of $\mathbf{D}_{i}\left(\left|Q'\right\rangle \right)$
by $\mathbf{D}_{i}\left(\hat{U}\left|Q'\right\rangle \right)$ leaving
all of the rest of the elements in the \emph{n}-tuple unchanged. Thus,
each player chooses from all of the possible quantum computations,
the one that maximizes the present value payoff function with all
the other strategies held fixed, which is in agreement with Nash \cite{key-10}.

The equilibrium price ket $\left|Q\right\rangle $ is such that each
player's present value payoff is optimal, given all of the other players'
strategies. The following theorem can now be proven:
\begin{thm}
\textbf{\emph{(Quantum Equilibrium Theorem) - Every quantum finite
game has a ket of capitalized Arrow-Debreu price amplitudes, unique
up to a capitalized Arrow-Debreu price-conserving unitary transformation,
that is an equilibrium solution for the game.}}\end{thm}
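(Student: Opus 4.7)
The plan is to reduce the existence portion of the theorem to Nash's classical theorem on the product of mixed-strategy simplices and then lift the resulting equilibrium back to a ket in $\mathcal{H}_G$. The first observation I would make is that, since distinct pure-strategy projectors $\hat{P}_{f_i}$ in $\mathcal{H}_i$ are mutually orthogonal, the coarse-grained chains $\hat{C}_{f_i}$ inherit mutual orthogonality in $\mathcal{H}_G$; hence the decoherence condition (16) holds automatically for every normalized $|Q\rangle$, and the weights $p_i(f_i):=\mathcal{D}(f_i,f_i:|Q\rangle)=\langle Q|\hat{C}_{f_i}|Q\rangle$ are nonnegative and sum to one. Each $\mathbf{D}_i(|Q\rangle)$ is therefore literally a probability distribution over player $i$'s pure strategies, and $\mathfrak{D}(|Q\rangle)$ lies in the compact convex set $\Delta:=\Delta_1\times\cdots\times\Delta_n$.

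I would then restrict attention to product kets $|Q\rangle=\bigotimes_{i=1}^{n}|q_i\rangle$ with $|q_i\rangle=\sum_{f_i}\sqrt{p_i(f_i)}\,|f_i\rangle$, which satisfy $|\langle f_{n\ldots 21}|Q\rangle|^{2}=\prod_{i}p_i(f_i)$. For such kets $pv_i[\mathfrak{D}(|Q\rangle)]$ reduces to the classical expected payoff $\sum \pi_i(f_{n\ldots 21})\prod_{j}p_j(f_j)$ under the mixed profile $(p_1,\ldots,p_n)$, the lifting $(p_1,\ldots,p_n)\mapsto|Q\rangle$ is surjective onto $\Delta$, and any unitary $\hat{U}$ acting on $\mathcal{H}_i$ alone preserves the product form while modifying only the factor $\mathbf{D}_i$. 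Condition (18) restricted to product kets is therefore verbatim the classical mixed-strategy Nash equilibrium condition for the finite normal-form game with payoffs $\pi_i$, and Nash's theorem, say via Kakutani's fixed-point theorem applied to the upper-hemicontinuous convex-valued best-response correspondence on $\Delta$, yields an equilibrium profile $(p_1^{*},\ldots,p_n^{*})$ whose product ket $|Q^{*}\rangle$ solves (18).

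For the uniqueness clause, the payoff $pv_i$ and the pricing matrices $\mathbf{D}_i$ depend on $|Q\rangle$ only through the capitalized Arrow-Debreu prices $|\langle f_{n\ldots 21}|Q\rangle|^{2}$; any unitary that preserves these prices sends an equilibrium ket to another with the same $\mathfrak{D}$ and the same payoffs, while conversely two kets yielding the same path probabilities differ at most by phase factors on each game-basis element, i.e., by precisely such a price-conserving unitary, establishing the equivalence class modulo which $|Q^{*}\rangle$ is determined. The main obstacle, both conceptually and technically, will be the reduction to product kets in step two: I must verify that genuinely entangled deviations cannot strictly improve any player's payoff and that the notation $pv_i[\mathfrak{D}]$ is genuinely a function of the marginals alone. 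This reduces to checking that $\hat{\pi}_i$ is diagonal in the game basis, so that $\langle Q|\hat{\pi}_i|Q\rangle$ depends on $|Q\rangle$ only through the path probabilities, and that the lifting from $\Delta$ to product kets is without loss of generality for the unilateral deviation side of (18). Once this is in place, the standard Kakutani argument and the phase-unitary characterization close the proof.
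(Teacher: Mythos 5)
Your proposal is correct in substance, but it implements the reduction to Nash's classical theorem differently from the paper. The paper transplants Nash's 1951 argument directly into the space of pricing matrices: it defines the gain functions $\varphi_{if_{i}}\left[\mathfrak{D}\left(\left|Q\right\rangle \right)\right]=\max\left(0,pv_{i}\left[\mathfrak{D}\left(\left|Q:f_{i}\right\rangle \right)\right]-pv_{i}\left[\mathfrak{D}\left(\left|Q\right\rangle \right)\right]\right)$, posits a unitary $\hat{U}$ whose induced action on each $\mathbf{D}_{i}$ is exactly Nash's self-map of the product of simplices, and appeals to the ``geometric coincidence'' with Nash's setting so that the fixed points (equilibria) exist; uniqueness is then only the observation that the set of kets with a given equilibrium $\mathfrak{D}$ forms a class related by price-conserving unitaries. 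You instead embed the classical mixed-strategy profiles as product kets with square-root amplitudes, note (correctly, and more explicitly than the paper) that the decoherence condition holds automatically and that each $\mathbf{D}_{i}$ is a bona fide point of the simplex, identify $pv_{i}$ on product kets with the classical mixed-extension payoff, and invoke Nash via Kakutani; your diagonal-phase characterization of the price-conserving equivalence class is also sharper than the paper's corresponding remark. What your route buys is an explicit lifting map and a concrete description of the uniqueness class; what the paper's route buys is that it never needs to argue that attention can be restricted to product kets, since it works throughout at the level of the tuple $\mathfrak{D}$. The obstacle you flag -- whether $pv_{i}\left[\mathfrak{D}\right]$ depends on the marginals alone -- is genuine for entangled kets (it does not), but it is an ambiguity of the paper's own definition (18): under the reading the paper itself uses in its proof, the deviation replaces only $\mathbf{D}_{i}$ and the payoff is evaluated Nash-style on the tuple (equivalently, the deviating player applies a local unitary on $\mathcal{H}_{i}$, which preserves the product form), and under that reading your reduction closes with no loss of generality, since correlated ``global'' deviations are not in the feasible set. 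Be aware, finally, that neither your argument nor the paper's establishes uniqueness of the equilibrium itself (which is false already classically); both establish only that each equilibrium price system determines its ket up to a price-conserving unitary, so you should state the conclusion in that form.
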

\begin{proof}
In the present proof, we assume all of the above definitions and game
formalism context. Let, then, $\left|Q\right\rangle \in\mathcal{H}_{G}$
be a ket of capitalized Arrow-Debreu price amplitudes and $\mathfrak{D}\left(\left|Q\right\rangle \right)$
be the corresponding \emph{n}-tuple of pricing matrices, define also
$\left|Q:f_{i}\right\rangle \in\mathcal{H}_{G}$ to be such that $\mathbf{D}_{j}\left(\left|Q:f_{i}\right\rangle \right)=\mathbf{D}_{j}\left(\left|Q\right\rangle \right)$
for every $j\neq i$ and $\mathbf{D}_{i}\left(\left|Q:f_{i}\right\rangle \right)=\hat{P}_{f_{i}}$.
Following Nash's proof in \cite{key-10}, we introduce the set of
continuous functions of $\mathfrak{D}\left(\left|Q\right\rangle \right)$
defined by:
\begin{equation}
\varphi_{if_{i}}\left[\mathfrak{D}\left(\left|Q\right\rangle \right)\right]:=\max\left(0,pv_{i}\left[\mathfrak{D}\left(\left|Q:f_{i}\right\rangle \right)\right]-pv_{i}\left[\mathfrak{D}\left(\left|Q\right\rangle \right)\right]\right)
\end{equation}
Let, now, $\hat{U}$ be a unitary transformation such that $\hat{U}\left|Q\right\rangle =\left|X\right\rangle $
and, for each, $\mathbf{D}_{i}\left(\left|Q\right\rangle \right)\in\mathfrak{D}\left(\left|Q\right\rangle \right)$
it holds that:
\begin{equation}
\mathbf{D}_{i}\left(\left|X\right\rangle \right)=\frac{1}{1+\sum_{f_{i}}\varphi_{if_{i}}\left[\mathfrak{D}\left(\left|Q\right\rangle \right)\right]}\left(\mathbf{D}_{i}\left(\left|Q\right\rangle \right)+\sum_{f_{i}}\varphi_{if_{i}}\left[\mathfrak{D}\left(\left|Q\right\rangle \right)\right]\hat{P}_{f_{i}}\right)
\end{equation}
then, the unitary transformation leads to a pricing matrix transformation
that coincides with the transformation addressed by Nash in \cite{key-10},
such that its fixed points are the equilibrium points. Given the geometric
coincidence between Nash's formulation and the structure of the space
of pricing matrices for the quantum game, it follows that Nash's theorem
applies and every game has an \emph{n}-tuple of equilibrium pricing
matrices.

For any $\mathfrak{D}\left(\left|Q\right\rangle \right)$ which is
an equilibrium, there is a family of kets defined by:
\begin{equation}
\mathcal{E}\left(\mathfrak{D}\left(\left|Q\right\rangle \right)\right)=\left\{ \left|X\right\rangle \in\mathcal{H}_{G}:\mathfrak{D}\left(\left|X\right\rangle \right)=\mathfrak{D}\left(\left|Q\right\rangle \right)\right\} 
\end{equation}
all of the kets in $\mathcal{E}\left(\mathfrak{D}\left(\left|Q\right\rangle \right)\right)$
are related to each other by equilibrium preserving unitary transformations,
which necessarily conserve the capitalized equilibrium quantum Arrow-Debreu
prices.
\end{proof}
In the above theorem, and proof, the probabilistic interpretation
is not invoked for the capitalized prices $\frac{\left|\psi\left(f_{n...21}\right)\right|^{2}}{D}$,
which have a mathematical structure, with respect to the payoff operators,
similar to probability measures. Within an evolutionary finance framework,
one can, however, offer a probabilistic interpretation by working
with a notion of \emph{fitness} as a selectibility of a strategic
solution in an adaptation problem \cite{key-3}, such that the probability
of a strategy being selected is proportional to its adaptive value.

In the present case, the capitalized quantum Arrow-Debreu prices constitute
a strategic valuation measure of a game path at the end of the decision
frame (future) and, therefore, they become a measure of the adaptive
value of a game alternative for the system of players.

The game paths with higher assigned prices are the most desired by
the players' system as good adaptive solutions, therefore, these should
be more probable of being selected by the system of players, hence,
the probabilities of selection of each game path should numerically
coincide with the capitalized quantum Arrow-Debreu prices. This is
a numerical coincidence, not a conceptual one, since the capitalized
prices are expressed in units of payoff, while the probabilities are
pure numbers. Under the evolutionary interpretation, the present value
of the decision, therefore, leads to an expectation on the future
decisional moment discounted to the present.

We now provide for an example from quantum corporate finance.

\subsection{A Two-Company Game Example}

In order to illustrate the formalism introduced so far, let us assume
that two companies, labelled A and B, are evaluating a new investment
opportunity, having the choice of implementing the project with one
of two alternative technologies, labelled 0 and 1.

Both companies are expected to announce simultaneously, the new investment,
making public to the markets the technology chosen, and neither company
knows beforehand what the other will choose, so that each decides
without knowledge of the other's decision.

If both companies choose the same technology, then, company A expects
to obtain a project's return index (PRI) of 2, while it obtains a
PRI of 1.5 if the companies do not choose the same technology. Company
B, on the other hand, expects to obtain a PRI of: 2.5 if the companies
do not choose the same technology; 1.4 if both companies choose technology
0 and 2 if both companies choose technology 1.

There are four primitive chains for this game, with a freedom to choose
who to place first in the chain. In the present case, for simplicity,
we follow the labels and write these chains as $0\overset{A_{s}}{\longrightarrow}1\overset{B_{s}}{\longrightarrow}2$,
where $A_{s}$ means $A$ chooses technology $s$, with $s=0,1$,
the same holding for $B_{s}$. The game's choice paths' morphisms
are, then, expressed as $0\overset{B_{s}A_{r}}{\longrightarrow}2$,
with $B_{s}A_{r}=\left(\mathcal{H}_{G},\mathcal{P}_{G},\hat{P}_{B_{s}A_{r}}\right)$
and $\mathcal{P}_{G}=\left\{ \hat{P}_{B_{s}A_{r}}=\left|B_{s}A_{r}\right\rangle \left\langle B_{s}A_{r}\right|:r,s=0,1\right\} $.

Given the description of the game, the payoff operators are, in turn,
given by:
\begin{equation}
\hat{\pi}_{A}=2\left(\hat{P}_{B_{0}A_{0}}+\hat{P}_{B_{1}A_{1}}\right)+1.5\left(\hat{P}_{B_{1}A_{0}}+\hat{P}_{B_{0}A_{1}}\right)
\end{equation}
\begin{equation}
\hat{\pi}_{B}=1.4\hat{P}_{B_{0}A_{0}}+2\hat{P}_{B_{1}A_{1}}+2.5\left(\hat{P}_{B_{1}A_{0}}+\hat{P}_{B_{0}A_{1}}\right)
\end{equation}

Up to a price-conserving unitary transformation, the quantum Nash
equilibrium is, in this case, given by:

\begin{equation}
\left|Q\right\rangle =\sqrt{\frac{5}{32}}\left(\left|B_{0}A_{0}\right\rangle +\left|B_{1}A_{0}\right\rangle \right)+\sqrt{\frac{11}{32}}\left(\left|B_{0}A_{1}\right\rangle +\left|B_{1}A_{1}\right\rangle \right)
\end{equation}
which is separable in \emph{A} and \emph{B}'s strategies:
\begin{equation}
\left|Q\right\rangle =\left(\frac{1}{\sqrt{2}}\left|B_{0}\right\rangle +\frac{1}{\sqrt{2}}\left|B_{1}\right\rangle \right)\otimes\left(\sqrt{\frac{5}{16}}\left|A_{0}\right\rangle +\sqrt{\frac{11}{16}}\left|A_{1}\right\rangle \right)
\end{equation}
and the resulting pricing matrices reflect this tensor product structure:
\begin{equation}
\mathbf{D}_{A}=\frac{5}{16}\hat{P}_{A_{0}}+\frac{11}{16}\hat{P}_{A_{1}},\,\mathbf{D}_{B}=\frac{1}{2}\hat{P}_{A_{0}}+\frac{1}{2}\hat{P}_{A_{1}}
\end{equation}
Assuming that the IRP are calculated for the beginning of the game,
which forms part of the {}``year 0'' at which the project is evaluated,
we have $D=1$ and, thus, assuming the quantum Arrow-Debreu prices
to be expressed in monetary units: $PV_{A}=\$1.75$ while $PV_{B}=\$2.15625$.
Under the evolutionary framework, introduced in the previous subsection,
$A$ chooses technology $0$ with probability equal to $\frac{5}{16}$
and chooses technology $1$ with probability $\frac{11}{16}$, while
$B$ plays a fair coin game for $0$ and $1$, these probabilities
are conserved for any quantum Arrow-Debreu price conserving transformation
of $\left|Q\right\rangle $ and the choice of one player is probabilistically
independent from the choice of the other player.

If, on the other hand, \emph{company A} were to announce first its
decision, then, \emph{B} could always wait for \emph{A} and choose
its strategy so as to maximize its payoff, a resulting general solution,
for $D=1$, is an entangled ket:
\begin{equation}
\left|Q\right\rangle =\psi_{A}(0)\left|B_{1}A_{0}\right\rangle +\psi_{A}(1)\left|B_{0}A_{1}\right\rangle 
\end{equation}
such that $PV_{A}=\left\langle Q\left|\hat{\pi}_{A}\right|Q\right\rangle =1.5$
and $PV_{B}=\left\langle Q\left|\hat{\pi}_{B}\right|Q\right\rangle =2.5$,
which is the result of either the path $B_{1}A_{0}$ or the path $B_{0}A_{1}$,
being played by the companies, it does not matter what the value of
the quantum Arrow-Debreu prices are, the result is always the same
for each company.

\section{Quantum Games with Exchange Economies}

We now expand the formalism of the previous section to include exchange
economies, first a pure exchange economy comprised of a lottery that
is a single bet game on the result of an underlying quantum game of
strategy, and, second, a securities economy.

\subsection{Single bet game}

In the present section it is useful to introduce Greek letter indexes
ranging in $1,2,...N$, keeping the Latin lettered indexes for denoting
adaptive agents (both agents in the exchange economy as well as game
players), thus, one assumes to order the morphic fundaments $f_{n...21}$
and write, according to that order, $f_{\alpha}$ with the index $\alpha=1,2,...,N$.
Assuming this notation, in order to introduce an exchange economy
structure, we, first, expand the strategic game of \textbf{subsection
2.1} with a lottery, by adding an addicional $x_{n+1}$ object, such
that the morphism (5) is expanded to: 
\begin{equation}
x_{0}\overset{f_{n...21}}{\longrightarrow}x_{n}\overset{\omega}{\longrightarrow}x_{n+1}
\end{equation}
where $\omega$ is an index ranging as $\omega=1,2,...N$. This is
called the single bet game, since there is only one lottery type.
The ket (12) is, accordingly, replaced by: 
\begin{equation}
\left|Q\right\rangle =\frac{1}{\sqrt{D}}\sum_{\omega\alpha}\psi\left(f_{\alpha},\omega\right)\left|f_{\alpha}\omega\right\rangle 
\end{equation}
The price amplitudes $\psi\left(f_{\alpha},\omega\right)$ are now
pricing, simultaneously, the game path and the lottery.

Introducing the lottery economy operator $\hat{e}$, such that:

\begin{equation}
\hat{e}\left|Q\right\rangle =\left|Q\right\rangle 
\end{equation}

\begin{equation}
\hat{e}\left|f_{\alpha}\omega_{k}\right\rangle =\begin{cases}
\left|f_{\alpha}\omega\right\rangle , & \omega=\alpha\\
0, & \omega\neq\alpha
\end{cases}
\end{equation}
then, it follows that the lottery result is entangled with the game,
that is, in the quantum game equilibrium, we have the entangled kets
as solutions to the above eigenvalue equation:
\begin{equation}
\left|Q\right\rangle =\frac{1}{\sqrt{D}}\sum_{\omega}\psi\left(f_{\omega},\omega\right)\left|f_{\omega}\omega\right\rangle 
\end{equation}
thus, the quantum state for the game is no longer a pure state, but,
instead a statistical mixture density operator, resulting from tracing
out the lottery economy, such that, from the systemic position of
the game, we have the density operators for the game and the lottery:

\begin{equation}
\hat{\rho}_{Game}=Tr_{Lottery}\left(\left|Q\right\rangle \left\langle Q\right|\right)=\sum_{\omega}\frac{\left|\psi\left(f_{\omega},\omega\right)\right|^{2}}{D}\hat{P}_{f_{\omega}}
\end{equation}
\begin{equation}
\hat{\rho}_{Lottery}=Tr_{Game}\left(\left|Q\right\rangle \left\langle Q\right|\right)=\sum_{\omega}\frac{\left|\psi\left(f_{\omega},\omega\right)\right|^{2}}{D}\hat{P}_{\omega}
\end{equation}
in both cases, the capitalized quantum Arrow-Debreu prices $\frac{\left|\psi\left(f_{\omega},\omega\right)\right|^{2}}{D}$
are assumed from now on to be, always, the Nash Equilibrium prices
for the underlying quantum game of strategy.

Now, let us introduce a two-period pure exchange economy with a single
perishable good in both periods, such that the agents, in the economy,
choose a consumption at time $0$ (beginning of the game) and state
contingent claims on lottery-dependent consumption for the end of
the game, taken, for simplicity, to be time $1$. 

The exchange econonomy's agents and the players are, in this case,
assumed to be different entities, while the players are addressing
the game of strategy, the agents are addressing the lottery, through
a consumption allocation problem, such that, $c_{i0}$ is the \emph{i}-th
agent's consumption at time $0$, and $c_{i\omega}$ is the \emph{i}-th
agent's consumption at time $1$. A utility operator on consumption
for each agent is introduced such that, for $i=1,2,...,I$ ($I$ being
the number of agents in the exchange economy):

\begin{equation}
\hat{u}_{i}\left|f_{\alpha}\omega_{k}\right\rangle =u_{i}\left(c_{i0},c_{i\omega}\right)\left|f_{\alpha}\omega\right\rangle 
\end{equation}
For two different agents, their utility operators are assumed to commute
$\left[\hat{u}_{i},\hat{u}_{j}\right]=\delta_{ij}$ and the utility
functions $u_{i}$ are assumed to be increasing and strictly concave
functions of the consumption plan $\left(c_{i0},c_{i\omega}\right)$.
We now have a problem of allocation of state contingent consumption
among agents, where, without loss of generality, the single consumption
good is used as the numeraire for the exchange economy \cite{key-5}.

Taking $C_{0}$ to be the aggregate time-0 consumption available and
$C_{\omega}$ the aggregate consumption in the $\omega$-th lottery
result at time $1$, the feasibility conditions are given by \cite{key-5}:
\begin{equation}
C_{0}=\sum_{i=1}^{I}c_{i0},\: C_{\omega}=\sum_{i=1}^{I}c_{i\omega}
\end{equation}

Assuming that each agent knows that the lottery and the strategic
game are entangled and assigns a subjective probability to each alternative
$\omega$, then, each agent has a subjectively assigned density operator
for the lottery:

\begin{equation}
\hat{\rho}_{Lottery}^{i}=\sum_{\omega}bel_{i}\left(f_{\omega},\omega\right)\hat{P}_{\omega}
\end{equation}
where $bel_{i}\left(f_{\omega},\omega\right)$ is a subjective statistical
weight that the agent assigns to the lottery, without any further
information available upon the underlying game's payoff system. The
(subjective) expected utility is:

\begin{equation}
\left\langle \hat{u}_{i}\right\rangle _{Bel}=Tr\left(\hat{\rho}_{Lottery}^{i}\hat{u}_{i}\right)
\end{equation}

Then, the following optimization problem, then, ensues for the Pareto
optimal allocation:
\begin{equation}
\begin{array}{cc}
 & \max_{\left\{ \left(c_{i0},c_{i\omega}\right)_{i=1}^{I},\omega=1,2,...,N\right\} }\sum_{i=1}^{I}\lambda_{i}\left\langle \hat{u}_{i}\right\rangle _{Bel}\\
s.t. & \sum_{i=1}^{I}=C_{\omega},\:\omega=1,2,...,N\\
 & \sum_{i=1}^{I}c_{i0}=C_{0}
\end{array}
\end{equation}

Forming the Lagrangian for the optimization problem we have:
\begin{equation}
\begin{aligned}\max_{\left\{ \left(c_{i0},c_{i\omega}\right)_{i=1}^{I},\omega=1,2,...,N\right\} }L=\\
 & =\sum_{i=1}^{I}\lambda_{i}\left\langle \hat{u}_{i}\right\rangle _{Bel}+\phi_{0}\left[C_{0}-\sum_{i=1}^{I}c_{i0}\right]+\\
 & +\sum_{\omega=1}^{N}\phi_{\omega}\left[C_{\omega}-\sum_{i=1}^{I}c_{i\omega}\right]
\end{aligned}
\end{equation}

The first-order conditions lead to:
\begin{equation}
\begin{array}{cc}
\lambda_{i}\sum_{\omega=1}^{N}bel_{i}\left(f_{\omega},\omega\right)\frac{\partial u_{i}(c_{i0},c_{i\omega})}{\partial c_{i0}}=\phi_{0}, & i=1,2,...,I\end{array}
\end{equation}
\begin{equation}
\begin{array}{cc}
\lambda_{i}bel_{i}\left(f_{\omega},\omega\right)\frac{\partial u_{i}(c_{i0},c_{i\omega})}{\partial c_{i\omega}}=\phi_{\omega}, & \omega=1,2,...,N,\end{array}i=1,2,...,I
\end{equation}
\begin{equation}
\sum_{i=1}^{I}c_{i\omega}=C_{\omega},\:\forall\omega=1,2,...,N
\end{equation}
\begin{equation}
\sum_{i=1}^{I}c_{i0}=C_{0}
\end{equation}
Since the utility functions are assumed to be increasing and strictly
concave and the weights $\left\{ \lambda_{i}\right\} _{i=1}^{I}$
are assumed to be strictly positive, the first order conditions are
necessary and sufficient for a global maximum \cite{key-5}.

Replacing (42) in (41) for each agent, the marginal rates of substitution
between present consumption and future lottery-state contingent consumption
are equal across individuals:
\begin{equation}
\frac{bel_{i}\left(f_{\omega},\omega\right)\frac{\partial u_{i}(c_{i0},c_{i\omega})}{\partial c_{i\omega}}}{\sum_{\omega=1}^{N}bel_{i}\left(f_{\omega},\omega\right)\frac{\partial u_{i}(c_{i0},c_{i\omega})}{\partial c_{i0}}}=\frac{\phi_{\omega}}{\phi_{0}}
\end{equation}
for $\omega=1,2,...,N$ and $i=1,2,...,I$.

Following Huang and Litzenberger in \cite{key-5}, the above optimization
problem can be solved for the cases where, taking $\phi_{0}=1$, $\phi_{\omega}$
become the lottery-state contingent Arrow-Debreu prices. In the present
case, since the contingency does not come from a pure lottery, but,
instead from a lottery that is entangled with a finite quantum game
of strategy, we may let $\phi_{\omega}$ equal the quantum Arrow-Debreu
price for the game, that is: 
\begin{equation}
\phi_{\omega}=\left|\psi\left(f_{\omega},\omega\right)\right|^{2}
\end{equation}
 If we replace (46) in (45), and let $\phi_{0}=1$, we obtain:
\begin{equation}
\frac{bel_{i}\left(f_{\omega},\omega\right)\frac{\partial u_{i}(c_{i0},c_{i\omega})}{\partial c_{i\omega}}}{\sum_{\omega=1}^{N}bel_{i}\left(f_{\omega},\omega\right)\frac{\partial u_{i}(c_{i0},c_{i\omega})}{\partial c_{i0}}}=\left|\psi\left(f_{\omega},\omega\right)\right|^{2}
\end{equation}
going further, and assuming that each agent has enough information
to form a financially sustained rational equilibrium prediction about
the game, that is, provided each agent, in the transaction economy,
has enough information on the game of strategy to compute the quantum
Nash equilibrium for the game and, therefore, the probabilities for
the lottery, then:
\begin{equation}
bel_{i}\left(f_{\omega},\omega\right)=\frac{\left|\psi\left(f_{\omega},\omega\right)\right|^{2}}{D}
\end{equation}
Replacing this last result in (47), and given the positivity of the
quantum Arrow-Debreu prices, the following result holds:
\begin{equation}
\frac{\partial u_{i}(c_{i0},c_{i\omega})}{\partial c_{i\omega}}=D\sum_{\omega=1}^{N}\frac{\left|\psi\left(f_{\omega},\omega\right)\right|^{2}}{D}\cdot\frac{\partial u_{i}(c_{i0},c_{i\omega})}{\partial c_{i0}}
\end{equation}
establishing the relation between the marginal utilities and the quantum
game of strategy solution.

\subsection{Securities market game}

Assuming the single-lottery framework of equations (30) to (32), we
can introduce a transaction market for $m$ securities with payoff
operators:
\begin{equation}
\hat{x}_{j}\left|f_{\alpha}\omega_{k}\right\rangle =\begin{cases}
x_{j}\left(\omega\right)\left|f_{\alpha}\omega\right\rangle , & \omega=\alpha\\
0, & \omega\neq\alpha
\end{cases}
\end{equation}
for $j=1,2...,m$ . Each operator is financially consistent with the
lottery economy, in the sense that $\left[\hat{x}_{j},\hat{e}\right]=0$
having the same entangled eigenstates described by equation (32).
Letting $s_{ij}$ and $S_{j}$ denote, respectively, the number of
securities held by the agent \emph{i}, in equilibrium, and the price
for security \emph{j} at time 0, the relevant optimization problem
for each agent's portfolio, given the previous subsection's framework,
can be formulated as follows \cite{key-5}:
\begin{equation}
\begin{array}{cc}
 & \max_{\left\{ c_{i0},s_{ij};j=1,2,...,m\right\} }\left\langle \hat{u}_{i}\right\rangle _{Bel}\\
s.t. & c_{i0}+\sum_{j}s_{ij}S_{j}=e_{i0}+\sum_{j=1}^{m}w_{ij}S_{j}
\end{array}
\end{equation}
where $e_{i0}$ is an endowment of time 0 consumption and $w_{ij}$
is a time 0 endowment of shares of security $j$. Given the previously
assumed properties of the utility functions, the necessary and sufficient
conditions for the \emph{i}-th agent's portfolio problem are given
by:
\begin{equation}
\sum_{\omega=1}^{m}\frac{bel_{i}\left(f_{\omega},\omega\right)\frac{\partial u_{i}(c_{i0},c_{i\omega})}{\partial c_{i\omega}}}{\sum_{\omega=1}^{m}bel_{i}\left(f_{\omega},\omega\right)\frac{\partial u_{i}(c_{i0},c_{i\omega})}{\partial c_{i0}}}x_{j}(\omega)=S_{j}
\end{equation}
with $c_{i\omega}:=\sum_{j=1}^{m}s_{ij}x_{j}\left(\omega\right)$.

To derive a Pareto optimal allocation that reflects the underlying
game of strategy structure, it is first, important to notice that,
given the entangled ket of equation (32), one may express a financial
valuation from the density operator for the lottery such that, for
the game's quantum Nash equilibrium, we have:
\begin{equation}
Tr\left(\hat{\rho}_{Lottery}\hat{x}_{j}\right)=\sum_{\omega=1}^{N}x_{j}(\omega)\frac{\left|\psi\left(f_{\omega},\omega\right)\right|^{2}}{D}
\end{equation}
this is the expected payoff for the \emph{j}-th security, then, following
\textbf{section 2.}'s framework, one may calculate a present value
as one would for any other player's position, and write:

\begin{equation}
PV_{j}=D\times Tr\left(\hat{\rho}_{Economy}\hat{x}_{j}\right)=D\sum_{\omega=1}^{N}x_{j}(\omega)\frac{\left|\psi\left(f_{\omega},\omega\right)\right|^{2}}{D}
\end{equation}
it is then (financially) natural for the securities market agents
to value the security in terms of this game payoff, such that the
condition holds:
\begin{equation}
S_{j}=PV_{j}
\end{equation}
which is financially sound since each state-contingent payoff is weighed
by the quantum game price of that state, the state being, in this
case, the result of the lottery which is, in turn, contingent upon
the game of strategy's path.

Now, if the number of securities is equal to the number of players
in the underlying game of strategy ($m=n$), and if each security
is a financially transactionable exposure to the corresponding player's
position in the game of strategy (\emph{securitization of game positions}),
in this case, one may consider that the security payoff is proportional
to the player's payoff, the constant of proportionality $\theta$
giving the exposure (for instance, the proportion of equity is an
example of an exposure to a company's value, the company being the
player, the equity holders being the financial agents and the shares
being the securities), formally, we may write $\hat{x}_{i}:=\theta\hat{\pi}_{i}$.

Thus, by playing for the quantum Nash equilibrium the players are
simultaneously maximizing the value of the corresponding security
asset, and therefore, the portfolio value for any financial agent
is maximized by the Nash optimizing adaptive behavior of the players
involved in the underlying game of strategy.

Denoting $\left|\psi\left(f_{\omega},\omega\right)\right|^{2}$ by
$\phi_{\omega}$ and letting, as before, $bel_{i}\left(f_{\omega},\omega\right)=\phi_{\omega}$,
by replacing (54) in (52), we obtain the optimum:

\begin{equation}
\sum_{\omega=1}^{N}\phi_{\omega}\frac{\frac{\partial u_{i}(c_{i0},c_{i\omega})}{\partial c_{i\omega}}}{\sum_{\omega=1}^{N}\phi_{\omega}\frac{\partial u_{i}(c_{i0},c_{i\omega})}{\partial c_{i0}}}x_{j}(\omega)=D\sum_{\omega=1}^{N}\frac{1}{D}\phi_{\omega}x_{j}(\omega)
\end{equation}
which can be rewritten as:

\begin{equation}
\sum_{\omega=1}^{N}\phi_{\omega}x_{j}(\omega)\left(\frac{\frac{\partial u_{i}(c_{i0},c_{i\omega})}{\partial c_{i\omega}}}{D\sum_{\omega=1}^{N}\phi_{\omega}\frac{\partial u_{i}(c_{i0},c_{i\omega})}{\partial c_{i0}}}-\frac{1}{D}\right)=0
\end{equation}
One solution, in particular, for (56) results from taking the argument
within the brakets to be zero, $\frac{\frac{\partial u_{i}(c_{i0},c_{i\omega})}{\partial c_{i\omega}}}{D\sum_{\omega=1}^{m}\phi_{\omega}\frac{\partial u_{i}(c_{i0},c_{i\omega})}{\partial c_{i0}}}-\frac{1}{D}=0$,
which leads to the inter-temporal Pareto optimal consumption plan
condition: 

\begin{equation}
\frac{\partial u_{i}(c_{i0},c_{i\omega})}{\partial c_{i\omega}}=D\sum_{\omega=1}^{m}\frac{1}{D}\phi_{\omega}\frac{\partial u_{i}(c_{i0},c_{i\omega})}{\partial c_{i0}}
\end{equation}
Therefore, even in incomplete securities' markets, as long as there
is a security exposure for each game player's position that is proportional
to the payoffs on that position, and assuming that the quantum Nash
equilibrium is played, one obtains a consumption Pareto optimal plan
for securities prices reflecting the quantum Nash equilibrium of the
game.

The optimal consumption plan depends upon two behaviors: a quantum
Nash optimizing behavior on the part of the players and an equilibrium
valuation behavior on the part of the securities' market agents, such
that the securities' market agents reflect, in each securities' price,
the financial present value of the exposure at the corresponding quantum
Nash equilibrium. Therefore, the adaptiveness of the financial market
system is inextricably interweaved with the adaptiveness of the business
game playing system.

\section{Conclusion}

Financial decisions, regarding the inter-temporal allocation of wealth
and of financial resources, does not take place within a single pure
lottery game structure, rather, while financial agents manage asset
portfolios, trying to optimize their intertemporal securities' management,
the value drivers for these assets come from a complex adaptive dynamics
in which companies play games of strategy with their stakeholders,
adapting to opportunities and responding to threats, expanding their
strengths and addressing their weaknesses, managing their business
risk. Business strategic decisions affect directly the probabilities
associated with different financial scenarios and, therefore, directly
affect asset allocation and valuation issues.

This state of affairs (business and financial) was the main point
of concern for the present article, dealing with the effects upon
the traditional financial optimization problems, when one abandons
the assumption of a pure lottery game towards a lottery whose results,
and therefore probability profiles, are entangled with a game of strategy
where players enact a quantum Nash equilibrium.

In order to operationalize this problem it became necessary to address
how one might obtain an Arrow-Debreu price structure from the game
of strategy itself and, at the same time, entangle that structure
with a lottery game, such that all of the financial lottery state-contingent
prices coincide with the quantum game equilibrium solution Arrow-Debreu
prices.

Such a financial approach to game theory, along with the quantum game
equilibrium theorem, derived in \textbf{section 2.}, thus, allowed
for the quantum formalism to be applied towards a greater effectiveness
in the integration between business game theory and financial decision
theory, showing how Pareto optimal allocations in a securities market
may reflect the game of strategy's quantum Nash equilibrium playing,
a result that is independent from any completeness assumption that
would restrict the number of linearly independent securities to equal
the number of lottery states.

\end{document}